\documentclass[11pt]{article}
\usepackage[a4paper,margin=1in]{geometry}
\usepackage{amsmath,amssymb,amsthm,mathtools}
\usepackage{microtype}
\usepackage[hidelinks]{hyperref}
\usepackage[T1]{fontenc}
\usepackage{lmodern}
\numberwithin{equation}{section}
\newtheorem{theorem}{Theorem}[section]
\newtheorem{lemma}[theorem]{Lemma}
\newtheorem{proposition}[theorem]{Proposition}
\title{Exact spectral gaps for random Pauli rotations}
\author{Ziyuan Dong$^{1}$ and Xiang Fan$^{2,*}$\\[0.5ex]
\small $^{1}$Institute of Quantum Computing and Software, School of Computer Science and Engineering,\\
\small Sun Yat-sen University, Guangzhou 510006, China\\
\small $^{2}$School of Mathematics, Sun Yat-sen University, Guangzhou 510275, China\\[0.5ex]
\small E-mail: \href{mailto:dongzy5@mail2.sysu.edu.cn}{dongzy5@mail2.sysu.edu.cn} (Z. Dong);\\
\small \href{mailto:fanx8@mail.sysu.edu.cn}{fanx8@mail.sysu.edu.cn} (X. Fan)\\
\small $^{*}$Corresponding author: Xiang Fan}
\date{}
\hypersetup{pdfauthor={Ziyuan Dong and Xiang Fan}}
\begin{document}
\maketitle
\begin{abstract}
We resolve the spectral-gap problem posed by Baer and Haah for random Pauli
rotations. In this walk, a nonidentity $n$-qubit Pauli operator $P$ and an angle
$\theta$ modulo $2\pi$ are chosen independently and uniformly, and the step is
$e^{\mathrm{i}\theta P}$. Writing $d=2^n$, we prove that the gap on
$\mathsf{SU}(d)$ is $(d-8)/(8(d-1))$ for $n\ge4$. This disproves their conjectured
formula on the full special unitary group. We also prove that the conjectured
value, $d(d-3)/(8(d^2-1))$, is exactly the gap on $\mathsf{PU}(d)$ for $n\ge3$.
The special-unitary gap is attained by an explicit vector in
$\bigwedge^8\mathbb C^d$, constructed from affine three-dimensional subspaces of
$\mathbb F_2^n$. Its nontrivial central action explains why balanced tensor
representations do not detect this smaller gap. On the projective group, the
gap is attained in $U\mapsto U^{\otimes4}\otimes\bar U^{\otimes4}$. The matching
lower bounds hold uniformly over all finite-dimensional unitary representations.
They combine the spectrum of the graph of anticommuting Pauli operators, a
minimum-weight bound for binary polynomials, and a local inequality on a
single-qubit Clifford-fixed subspace. The proof is analytic and requires no
computational verification.
\end{abstract}
\noindent\textbf{Keywords.} Random Pauli rotations; spectral gaps; random walks on compact groups;
unitary designs; Clifford groups; central characters.

\smallskip
\noindent\textbf{Mathematics Subject Classification (2020).}
Primary 60B15; Secondary 81P68, 22C05, 81P45.

\section{Introduction and main theorem}\label{sec:exactgap-introduction}

We write $\mathsf{SU}(d)$ for the complex unitary $d\times d$ matrices
of determinant one. Its center is
$Z(\mathsf{SU}(d))=\{\zeta\mathbf{1}_d:\zeta^d=1\}$, and
$\mathsf{PU}(d)=\mathsf{SU}(d)/Z(\mathsf{SU}(d))$.
A unitary representation $\rho$ is a continuous group homomorphism into the
inner-product-preserving operators on a finite-dimensional complex vector
space $V$. It is irreducible if there is no proper nonzero subspace $W$ with
$\rho(g)W\subseteq W$ for every group element $g$, and nontrivial if some
group element acts differently from the
identity. Trivial central action means that every element of the center
acts as the identity; precisely these representations descend to the
quotient, by assigning the same operator to all members of a coset.
For vector spaces $V,W$, $V\otimes W$ is generated by the symbols
$v\otimes w$, subject to bilinearity in $v$ and $w$; tensor powers repeat
the same factor. A qubit has space $\mathbb C^2$, so an $n$-qubit system
has space $(\mathbb C^2)^{\otimes n}$.

Baer and Haah conjectured an exact spectral gap for random Pauli rotations
on the special unitary group \cite[Definition~1.1 and Conjecture~3.48]{BH26}. We determine the
correct gap and show that their proposed value instead gives the exact gap
on the projective unitary group. The distinction is caused by central action:
the representation attaining the smaller gap does not descend to the
projective quotient.

Let $n\ge3$ be the number of qubits and put $d=2^n$. Following
\cite[Definition~1.2]{BH26}, let
\[
 \mathcal P_n=\{\mathbf{1}_2,\sigma^x,\sigma^y,\sigma^z\}^{\otimes n}
 \setminus\{\mathbf{1}_d\}
\]
be the $d^2-1$ nonidentity self-adjoint Pauli representatives. The measure
$\nu_{\mathsf{RPR}}$ is the law of $e^{\mathrm{i}\theta P}$, with $P$ uniform
in $\mathcal P_n$ and $\theta$ independently uniform in
$\mathbb R/2\pi\mathbb Z$. We write
$\mathsf{PU}(d)=\mathsf{SU}(d)/Z(\mathsf{SU}(d))$ and use the same symbol for
the distribution of its image in this quotient when the group is clear.

For a compact group $G$, let $\mu(G)$ be its Haar probability measure,
that is, its translation-invariant probability measure. In the
notation of \cite[Definition~1.1]{BH26}, for a finite-dimensional continuous
unitary representation $\rho$ of $G$,
\[
 \begin{aligned}
 M(\nu,\rho,G)&=\operatorname*{\mathbb E}_{g\sim\nu}\rho(g),\\
 \Delta(\nu,\rho,G)&=1-\bigl\lVert M(\nu,\rho,G)-M(\mu(G),\rho,G)\bigr\rVert_\infty,\\
 \Delta(\nu,G)&=\inf_\rho\Delta(\nu,\rho,G).
 \end{aligned}
\]
Here $M(\mu(G),\rho,G)$ projects onto
$V^G=\{v\in V:\rho(g)v=v\text{ for every }g\in G\}$, the subspace of $G$-fixed
vectors. The operator norm is
$\lVert A\rVert_\infty=\sup_{\lVert v\rVert=1}\lVert Av\rVert$. The \emph{balanced tensor
representations} of \cite[Remark~1.5]{BH26} are
\[
 \tau_{t,t}:\mathsf{SU}(d)\ni U\longmapsto
 U^{\otimes t}\otimes\bar U^{\otimes t}\qquad(t\in\mathbb Z_{\ge0}).
\]
Here $\bar U$ denotes entrywise complex conjugation. These representations
have trivial central action. Their $t$-th moment operator is
$M(\nu,\tau_{t,t},G)$. Comparing it with $M(\mu(G),\tau_{t,t},G)$ measures
agreement with Haar averages; random Pauli rotations were studied in this
setting by Haah, Liu, and Tan \cite{haah2025efficient}.

Baer--Haah's Conjecture~3.48 predicts the value $d(d-3)/(8(d^2-1))$ on
$\mathsf{SU}(d)$ for $n\ge3$. Their Proposition~3.17 supplies a nonzero
eigenvector with this gap in $\tau_{4,4}$ \cite{BH26}. Our result determines
both group gaps and the representations attaining them.

The exterior power $\bigwedge^k V$ is generated by the multilinear products
$v_1\wedge\cdots\wedge v_k$ that vanish whenever two factors coincide.
Over $\mathbb C$, this is identified with the antisymmetric tensors in
$V^{\otimes k}$, on which every permutation of factors acts by its sign.
On $V=\mathbb C^d$, its group action is
$v_1\wedge\cdots\wedge v_k\mapsto Uv_1\wedge\cdots\wedge Uv_k$.
In the weight tuple below, exponents indicate repeated entries.

\begin{theorem}\label{thm:exactgap-main}
For $d=2^n$, the exact spectral gaps are
\begin{equation*}
 \begin{aligned}
 \Delta(\nu_{\mathsf{RPR}},\mathsf{SU}(d))
  &=\frac{d-8}{8(d-1)}, && n\ge4,\\
 \Delta(\nu_{\mathsf{RPR}},\mathsf{PU}(d))
  &=\frac{d(d-3)}{8(d^2-1)}, && n\ge3.
 \end{aligned}
\end{equation*}
For $n\ge4$, the special-unitary gap is attained in the nontrivial
irreducible representation $\bigwedge^8\mathbb C^d$. For $n\ge3$, the projective gap is
attained in the irreducible representation with highest weight
$(1^4,0^{d-8},-1^4)$, and hence in $\tau_{t,t}$ for every $t\ge4$.
At $n=3$, both group gaps equal $5/63$.
\end{theorem}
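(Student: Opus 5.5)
We first rewrite the moment operator in a form we can compute with. For a single Pauli $P$, averaging $e^{\mathrm i\theta P}$ over $\theta$ in a representation $\rho$ yields the spectral projection $\Pi_P$ onto the kernel of $d\rho(P)$. Indeed $d\rho(\mathrm iP)$ has integer eigenvalues, because $P$ has eigenvalues $\pm1$ and $e^{2\pi\mathrm iP}=\mathbf 1$. Hence
\[
M(\nu_{\mathsf{RPR}},\rho,G)=\frac{1}{d^2-1}\sum_{P\in\mathcal P_n}\Pi_P ,
\]
and the operator is positive semidefinite. Its norm on $(V^G)^\perp$ is therefore its largest eigenvalue there. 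Equivalently, the gap equals $\frac{1}{d^2-1}$ times the minimum of $\sum_P\langle v,(\mathbf 1-\Pi_P)v\rangle$ over unit vectors $v$ orthogonal to the fixed vectors. Since $\bigwedge^8\mathbb C^d$ is irreducible and nontrivial, and $\tau_{t,t}$ contains the weight-$(1^4,0^{d-8},-1^4)$ irreducible for $t\ge4$, the proof has two parts. For the upper bounds we exhibit explicit vectors. For the lower bounds we prove that no nontrivial irreducible $\rho$ has a smaller value.

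For the upper bounds, we decompose $\mathbb C^d$ into the computational basis indexed by $\mathbb F_2^n$. A Pauli $P=X^aZ^b$, up to phase, acts on $e_x$ by $e_x\mapsto\pm e_{x+a}$. In $\bigwedge^8$, $\Pi_P$ projects onto wedge products of $P$-eigenvectors with equally many $+1$ and $-1$ factors. Take $v=\sum_A \epsilon_A\, e_{A}$, summing over affine 3-flats $A\subset\mathbb F_2^n$ with suitable signs $\epsilon_A$. We then compute $\langle v,\Pi_Pv\rangle$ case by case. For $a=0$, the value depends on whether $b$ is constant on the flats, that is, whether $b$ lies in the annihilator of the direction space. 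For $a\ne0$, it depends on whether $a$ lies in the direction space. Summing over all $P$, the count should come out to $1-\frac{d-8}{8(d-1)}$. The projective upper bound follows from Baer--Haah Proposition~3.17, which we may invoke directly.

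For the lower bounds we take an arbitrary irreducible $\rho$ with highest weight $\lambda$. The strategy is to lower-bound $\sum_P(\mathbf 1-\Pi_P)$ by combining three tools.
\begin{enumerate}
\item \emph{Anticommutation graph.} The Casimir identity $\sum_P d\rho(P)^2=c_\lambda\mathbf 1$ compares $\sum_P(\mathbf 1-\Pi_P)$ with $\sum_P d\rho(P)^2$. The spectrum of the anticommutation graph on $\mathcal P_n$ (eigenvalues $d^2/2$, $0$, $-d/2$ type values) controls the correlations $\langle v,\Pi_P\Pi_Q v\rangle$ for anticommuting $P$ and $Q$.
\item \emph{Binary polynomials.} A minimum-weight bound for binary polynomials (Reed--Muller distance) bounds the number of $P$ that fix a weight vector. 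This is how one shows that large or unbalanced weights give a large gap. The central character enters here through $|\lambda|\bmod d$, which forces odd exterior-type weights.
\item \emph{Single-qubit local inequality.} For the remaining small weights, we restrict to one qubit and average over its Clifford group. We then prove an inequality on the Clifford-fixed subspace, which reduces the bound to a finite family of low weights.
\end{enumerate}
Within that finite family we compare: on $\mathsf{SU}$ the minimum is attained at $\bigwedge^8$, and on $\mathsf{PU}$, where $|\lambda|\equiv0$, it is attained at $(1^4,0^{d-8},-1^4)$. The case $n=3$ follows because there $\frac{d-8}{8(d-1)}=0$ is degenerate, since $\bigwedge^8\mathbb C^8$ is trivial. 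Both values then coincide at $\frac{8\cdot5}{8\cdot63}=\frac{5}{63}$.

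The main obstacle is the uniform lower bound over all irreducibles, especially ruling out intermediate weights with nontrivial central character. We would first establish monotonicity of the bound in $\lambda$ under adding boxes. This would reduce the problem to minimal representatives of each central character class, which are the fundamental representations $\bigwedge^k$ and their duals. We would then handle $\bigwedge^k$ directly with the binary-polynomial weight bound, using the fact that a subset of $\mathbb F_2^n$ stabilized by many $X^a$ is close to a flat.
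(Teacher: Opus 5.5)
\textbf{There is a genuine gap: the lower bounds, which carry the theorem, are not proved, and the reduction you propose for them fails.} Your upper-bound half is workable.

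The gap is not monotone under adding boxes. On $\mathbb C^d$ every $J_P=P$ has trivial kernel, so $F_P=\mathbf 1$ and the gap is $1$. Yet $\bigwedge^8\mathbb C^d$, whose diagram contains that of $\mathbb C^d$, has gap $\frac{d-8}{8(d-1)}$.

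Nor are the minimal representatives of a central class the fundamental representations. In the trivial class, which is the whole $\mathsf{PU}(d)$ problem, the minimal nontrivial diagrams (normalized so that $\lambda_d=0$) are all those with $d$ boxes. One is the adjoint, where $F_PQ=Q$ exactly when $PQ=-QP$, so $\mathcal H=\frac{d^2}{2}\mathbf 1$. Another is $(2^4,1^{d-8})$, i.e.\ $V(1^4,0^{d-8},-1^4)$, with gap $\frac{d(d-3)}{8(d^2-1)}$. So no finite family emerges uniformly in $n$.

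What you need are operator inequalities valid in every representation at once. For $\mathsf{SU}(d)$, the graph spectrum ($d^2/2$ once and $\pm d/2$; there is no $0$), transported through $v\mapsto(F_Pv)_P$, gives only the upper bound $\mathcal R_{\mathrm{ac}}\preceq\frac d2\mathcal H+\frac{d}{2(d+1)}\mathcal H^2$. You must pair it with $\mathcal R_{\mathrm{ac}}\succeq\frac{d^2}{16}\mathcal H$. That bound comes from the $\mathfrak{su}(2)$-triangle inequality $\mathcal H_T^2\succeq\frac54\mathcal H_T$ by triangle counting, and it never appears in your plan. Together they give $\mathcal H^2\succeq\frac{(d-8)(d+1)}8\mathcal H$, with no highest weights, binary polynomials or central characters involved.

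At $n=3$ this inequality is vacuous, and one must keep $\mathcal R_{\mathrm{com}}\succeq0$ to get $\mathcal H^2\succeq5\mathcal H$. Observing that the two formulas coincide there does not bound the $\mathsf{SU}(8)$ representations with nontrivial central character.

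For $\mathsf{PU}(d)$ the missing idea is to split off $V^{\mathsf{Cl}_{\pi/4}(n)}$, the vectors fixed by every $e^{\mathrm i\pi P/4}$.
\begin{itemize}
\item On the complement, $\mathcal H\succeq\mathcal H_8\succeq(1+d^2/8)\mathbf 1$ by the $\frac32$ triangle inequality.
\item On the fixed space every $J_P$-eigenvalue is a multiple of $8$. This is what makes the Walsh sums $a_p$ divisible by $8$, so that $a_p/8\bmod2$ is cubic and, if nonzero, has weight at least $d/8$.
\item Trivial central action serves only to make the weight integral with zero sum, so that $a_0=0$. In $\bigwedge^8$ one has $a_0=8$ and the count over $p\ne0$ drops to $d/8-1$; this is how the smaller $\mathsf{SU}$ gap escapes the argument.
\end{itemize}
Your tool (3) is not a reduction to low weights, and it does not supply the remaining ingredients:
\begin{itemize}
\item the exact identity $\mathcal H^2=(1+d^2/16)\mathcal H+\mathcal R_{\mathrm{com}}+\mathcal D_\triangle$;
\item the split $\mathcal H=\mathcal H^{\mathrm{odd}}+\mathcal H^{\mathrm{even}}$;
\item the commuting bound, obtained by averaging over maximal commuting sets;
\item the $\mathsf{Cl}(1)$-fixed inequality $\mathcal H_T^2-\frac54\mathcal H_T\succeq\frac14\mathcal H_T^{\mathrm{even}}$, with spin $8$ checked by hand.
\end{itemize}

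Two smaller points. First, your flat-by-flat count can be completed. Diagonal $P$ contribute $\frac{d-8}8\lVert v\rVert^2$. For $a\ne0$, flats with $a\in U_A$ satisfy $J_Pe_A=0$. Flats with $a\notin U_A$ fall into sectors; on each, $v$ restricts to an $\operatorname{RM}(1,3)$ code state, a fraction $14/16$ of whose squared norm lies in $\ker J_P$. This yields $\frac{(d-8)(d+1)}8$, but only after you track the orientation signs. The paper sidesteps this: $\rho^{\wedge8}(e^{\mathrm i\pi P/4})\varpi_n=\varpi_n$ gives $F_P=J_P^2/64$ on $\varpi_n$, and the Casimir finishes the computation.

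Second, placing the Baer--Haah eigenvector in $V(1^4,0^{d-8},-1^4)$ requires comparing Casimir values on the five summands of $\operatorname{End}(\bigwedge^4\mathbb C^d)$. Your proposal omits this.
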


In particular,
\[
 \Delta(\nu_{\mathsf{RPR}},\mathsf{SU}(16))=\frac{17}{255}
 <\frac{26}{255}=\Delta(\nu_{\mathsf{RPR}},\mathsf{PU}(16)).
\]
For $n\ge4$ the difference of the two formulas is $(d+2)/(2(d^2-1))$.
Both gaps tend to $1/8$. The special-unitary formula is increasing for
$d>8$, and $5/63>1/15$; hence the sharp uniform lower bound over $n\ge3$
is $1/15$, attained at $n=4$.

The central element $z=e^{2\pi\mathrm{i}/d}\mathbf{1}$ acts on
$\bigwedge^8\mathbb C^d$ by $e^{16\pi\mathrm{i}/d}$, which is nontrivial
for $n\ge4$. Thus this representation cannot occur in any $\tau_{t,t}$.
This explains why the eigenvector found by Baer--Haah gives the projective
gap but not the full special-unitary gap. At $d=8$, the exterior power is
the one-dimensional action $U\mapsto\det U=1$ and does not furnish an upper bound.

Section~\ref{sec:exactgap-preliminaries} defines the averaging projectors and
recalls two local inequalities of Baer--Haah. Section~\ref{sec:exactgap-su-proof} proves the special-unitary
assertion by a graph-spectrum lower bound and an explicit exterior-power
vector attaining it. Section~\ref{sec:exactgap-pu-proof} proves the
projective assertion: after separating the subspace fixed by
$e^{\mathrm{i}\pi P/4}$, the lower bound follows from a binary-polynomial
estimate for the number of nonzero values and a local inequality on the
fixed subspace of the single-qubit subgroup in \cite[Lemma~3.10]{BH26}. The upper
bound is supplied by \cite[Proposition~3.17]{BH26}.

\section{Averaging projectors and local inequalities}\label{sec:exactgap-preliminaries}

Let $V$ carry a finite-dimensional unitary representation $\rho$ of
$\mathsf{SU}(d)$. For a closed additive subgroup
$\Theta\le\mathbb R/2\pi\mathbb Z$ with Haar probability measure $\mu_\Theta$,
use the averaging projector of \cite[Section~3]{BH26}:
\[
 \Pi(\rho|\Theta,P)
 =\operatorname*{\mathbb E}_{\theta\sim\mu_\Theta}\rho(e^{\mathrm{i}\theta P}).
\]
Its range consists of the vectors fixed by all
$\rho(e^{\mathrm{i}\theta P})$, $\theta\in\Theta$. Write
\[
 \begin{aligned}
 \Theta_\infty&:=\mathbb R/2\pi\mathbb Z,\\
 \Theta_8&:=\{m\pi/4+2\pi\mathbb Z:m\in\mathbb Z\},\\
 \Theta_{16}&:=\{m\pi/8+2\pi\mathbb Z:m\in\mathbb Z\}.
 \end{aligned}
\]
Write $\rho(e^{\mathrm{i}\theta P})=e^{\mathrm{i}\theta J_P}$ for the
self-adjoint generator $J_P$. Its eigenvalues are integers, and
$\Pi(\rho|\Theta_\infty,P)$ projects onto $\ker J_P$. Put
\[
 F_P=\mathbf{1}-\Pi(\rho|\Theta_\infty,P),\qquad
 \mathcal H=\sum_{P\in\mathcal P_n}F_P.
\]
We use $A\succeq B$ on a subspace $W$ to mean
$\langle v,(A-B)v\rangle\ge0$ for every $v\in W$; preservation of $W$
by $A$ and $B$ is not assumed. With no subspace specified, $W=V$.
Then $M(\nu_{\mathsf{RPR}},\rho,\mathsf{SU}(d))
=\mathbf{1}-\mathcal H/(d^2-1)\succeq0$. Since the operators
$\mathrm{i}P$ span $\mathfrak{su}(d)$, the real vector space of
traceless skew-Hermitian matrices with bracket $[A,B]=AB-BA$, the common kernel of the $F_P$ is
$V^{\mathsf{SU}(d)}$. For a nontrivial irreducible representation the gap
is therefore the smallest eigenvalue of $\mathcal H$, divided by $d^2-1$.
Uniform bounds on these eigenvalues give the full group gap.
Sums over pairs are ordered.

As in the proof of \cite[Lemma~3.4]{BH26}, let $\mathcal G$ be the graph on
$\mathcal P_n$ with $P\sim Q$ exactly when $PQ=-QP$. Identifying the labels
with $\mathbb F_2^{2n}\setminus\{0\}$, write a label as $(a,b)$,
where $(0,0),(1,0),(0,1),(1,1)$ at each qubit represent
$\mathbf{1}_2,\sigma^x,\sigma^z,\sigma^y$, respectively, up to scalar phase.
The symplectic form is
$\langle(a,b),(a',b')\rangle=a\cdot b'+b\cdot a'$.
Anticommutation is equivalent to this pairing being one.
If $P,Q$ anticommute, $\mathrm{i}P,\mathrm{i}Q$ generate a
copy of $\mathfrak{su}(2)$. Interchanging $P,Q$ if necessary, put
$R=-\mathrm{i}PQ\in\mathcal P_n$. Following Baer--Haah, call
$T=\{P,Q,R\}$ an \emph{$\mathfrak{su}(2)$-triangle}. The choice of sign is
immaterial because $\Pi(\rho|\Theta,-P)=\Pi(\rho|\Theta,P)$. Write
\[
 \mathcal H_T=\sum_{P\in T}F_P,\qquad
 \mathcal H_{8,T}=\sum_{P\in T}\bigl(\mathbf{1}-\Pi(\rho|\Theta_8,P)\bigr).
\]
Every ordered anticommuting pair lies in exactly one such triangle. Each
Pauli operator has $d^2/2$ anticommuting neighbors and belongs to $d^2/4$
triangles.

Applied to the sums of $\mathbf{1}-\Pi(\rho|\Theta,P)$ above, the local
estimates of \cite[Lemmas~3.13 and~3.10, and the proof of Lemma~3.4]{BH26} give
\begin{equation}\label{eq:exactgap-local-inputs}
 \mathcal H_T^2\succeq\frac54\mathcal H_T,\qquad
 \mathcal H_{8,T}^2\succeq\frac32\mathcal H_{8,T}
\end{equation}
in every representation of the corresponding copy of $\mathsf{SU}(2)$.

Set $\mathcal R_{\mathrm{ac}}=\sum_{PQ=-QP}F_PF_Q$. Triangle counting yields
\begin{equation}\label{eq:exactgap-triangle-count}
 \mathcal R_{\mathrm{ac}}
 =\sum_T(\mathcal H_T^2-\mathcal H_T)
 \succeq\frac14\sum_T\mathcal H_T
 =\frac{d^2}{16}\mathcal H.
\end{equation}

\section{The special-unitary spectral gap}\label{sec:exactgap-su-proof}

\subsection{A representation-uniform lower bound}

The adjacency matrix of $\mathcal G$ has $(P,Q)$ entry $1$ if
$PQ=-QP$ and $0$ otherwise.

\begin{lemma}[Spectrum of $\mathcal G$]\label{lem:exactgap-G-spectrum}
Let $A_{\mathcal G}$ be the adjacency matrix of $\mathcal G$ on the $d^2-1$ nonzero symplectic
labels. Then
\begin{equation*}
 A_{\mathcal G}\mathbf 1=\frac{d^2}{2}\mathbf 1,
 \qquad
 A_{\mathcal G}^2=\frac{d^2}{4}(\mathbf{1}+\mathbf 1\mathbf 1^{\mathsf T}),
 \qquad
 A_{\mathcal G}\preceq\frac d2\mathbf{1}+\frac{d}{2(d+1)}\mathbf 1\mathbf 1^{\mathsf T}.
\end{equation*}
\end{lemma}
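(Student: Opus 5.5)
The plan is to work directly with the symplectic description of $\mathcal G$. We identify vertices with nonzero $x\in\mathbb F_2^{2n}$ and write $\omega(x,y)=\langle x,y\rangle$, so that $(A_{\mathcal G})_{xy}=\omega(x,y)$ read as an integer in $\{0,1\}$.

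\textbf{Row sums.} For fixed $x\neq0$, the functional $y\mapsto\omega(x,y)$ is nonzero because the form is nondegenerate. Its value-one set is therefore an affine hyperplane of $\mathbb F_2^{2n}$, with exactly $2^{2n-1}=d^2/2$ elements, none of them $0$. This gives $A_{\mathcal G}\mathbf 1=\frac{d^2}{2}\mathbf 1$.

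\textbf{The square.} We compute $(A_{\mathcal G}^2)_{xz}$, the number of $y\neq0$ with $\omega(x,y)=\omega(z,y)=1$.
\begin{itemize}
\item If $x=z$, the count is $d^2/2$.
\item If $x\neq z$, the functionals $\omega(x,\cdot)$ and $\omega(z,\cdot)$ are linearly independent over $\mathbb F_2$. Prescribing both values as $1$ then cuts out an affine subspace of codimension $2$, avoiding $0$, with $d^2/4$ points.
\end{itemize}
Hence $A_{\mathcal G}^2=\frac{d^2}{4}\mathbf 1+\frac{d^2}{4}\mathbf 1\mathbf 1^{\mathsf T}$. The diagonal is consistent, since $d^2/4+d^2/4=d^2/2$.

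\textbf{Spectrum.} $A_{\mathcal G}$ is symmetric and commutes with $J=\mathbf 1\mathbf 1^{\mathsf T}$, so we split $\mathbb R^{d^2-1}$ into $\mathbb R\mathbf 1$ and $\mathbf 1^\perp$.
\begin{itemize}
\item On $\mathbf 1$ the eigenvalue is $d^2/2$.
\item On $\mathbf 1^\perp$ we have $A_{\mathcal G}^2=\frac{d^2}{4}$, so every eigenvalue there is $\pm d/2$.
\end{itemize}
It remains to check the claimed operator inequality on each piece, using that both sides commute with $J$.
\begin{itemize}
\item On $\mathbf 1^\perp$ the right-hand side equals $\frac d2$, which dominates eigenvalues $\pm d/2$.
\item On $\mathbf 1$ the right-hand side equals $\frac d2+\frac{d}{2(d+1)}(d^2-1)=\frac d2+\frac{d(d-1)}2=\frac{d^2}{2}$, matching the eigenvalue exactly.
\end{itemize}
Since the difference is diagonal in this decomposition with nonnegative entries, it is positive semidefinite.

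\textbf{Main difficulty.} The only delicate step is the independence argument in the off-diagonal count. Distinct nonzero $x,z$ over $\mathbb F_2$ are linearly independent, and nondegeneracy transfers this to the functionals. With that in hand, the rest is bookkeeping.
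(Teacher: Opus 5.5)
Your proposal is correct and follows the paper's proof essentially step for step: you count solutions of one, or two independent, affine equations over $\mathbb F_2$ for the first two identities, then use $A_{\mathcal G}^2=d^2/4$ on $\mathbf 1^\perp$ to bound the spectrum there and check equality on $\mathbf 1$. You also make explicit two details the paper leaves implicit: nondegeneracy of the form, and the commutation of $A_{\mathcal G}$ with $\mathbf 1\mathbf 1^{\mathsf T}$.
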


\begin{proof}
The first identity counts one nonzero linear equation over $\mathbb F_2$. For two distinct nonzero labels
$p,q$, the equations $\langle p,r\rangle=\langle q,r\rangle=1$ are independent, and have $d^2/4$
solutions. For $p=q$ there are $d^2/2$ solutions. This proves the second identity. On
$\mathbf 1^\perp$, the eigenvalues of $A_{\mathcal G}$ are among $\{-d/2,d/2\}$. The last expression
equals $d^2/2$ on $\mathbf 1$ and $d/2$ on $\mathbf 1^\perp$, proving the inequality.
\end{proof}

Apply this matrix inequality to
$\mathcal T:V\to\mathbb C^{d^2-1}\otimes V$, $\mathcal T v=(F_Pv)_{P\in\mathcal P_n}$. Since
$\mathcal T^*\mathcal T=\mathcal H$,
$\mathcal T^*((\mathbf 1\mathbf 1^{\mathsf T})\otimes \mathbf{1})\mathcal T=\mathcal H^2$, and
$\mathcal T^*(A_{\mathcal G}\otimes \mathbf{1})\mathcal T=\mathcal R_{\mathrm{ac}}$, it yields
\begin{equation*}
 \mathcal R_{\mathrm{ac}}\preceq\frac d2\mathcal H+\frac{d}{2(d+1)}\mathcal H^2.
\end{equation*}
Combining this with \eqref{eq:exactgap-triangle-count} proves, in every representation,
\begin{equation}\label{eq:exactgap-su-lower}
 \mathcal H^2\succeq\frac{(d-8)(d+1)}8\mathcal H.
\end{equation}
For $d>8$, every nonzero eigenvalue of $\mathcal H$ is at least $(d-8)(d+1)/8$. This proves the lower bound in
the special-unitary assertion of Theorem~\ref{thm:exactgap-main}. No restriction on the action of the center or on the number of tensor
factors has been used.

At $d=8$ we instead retain the nonnegative commuting terms in the same calculation, obtaining
\begin{equation*}
 \mathcal H^2\succeq(1+d^2/16)\mathcal H.
\end{equation*}
Indeed, all $F_PF_Q$ with $P\ne Q$ and $PQ=QP$ are positive semidefinite. Thus every nonzero eigenvalue of $\mathcal H$ is at least $5$ when $d=8$.
The $\tau_{4,4}$ eigenvector of \cite[Proposition~3.17]{BH26} gives equality,
proving the $n=3$ special-unitary assertion of Theorem~\ref{thm:exactgap-main}.

Here an \emph{affine three-flat} in $\mathbb F_2^n$ means a translate $b+U$
of a three-dimensional linear subspace $U\le\mathbb F_2^n$.

\subsection{A vector from affine three-dimensional subspaces}\label{subsec:exactgap-affine}

Index eight tensor positions by $x\in\mathbb F_2^3$, in lexicographic binary order. For $c\in\mathbb F_2^8$, write
$\lvert c\rangle=\lvert c_1\rangle\otimes\cdots\otimes\lvert c_8\rangle$,
where $\lvert0\rangle,\lvert1\rangle$ are the standard basis of $\mathbb C^2$.
Let
\begin{equation*}
 \mathcal C=\operatorname{RM}(1,3)
 =\bigl\{(a+b\cdot x)_{x\in\mathbb F_2^3}:a\in\mathbb F_2,\ b\in\mathbb F_2^3\bigr\}
 \subseteq\mathbb F_2^8,
 \qquad
 \lvert \mathcal C\rangle=\sum_{c\in\mathcal C}\lvert c\rangle.
\end{equation*}
A binary linear code of length eight is a linear subspace of $\mathbb F_2^8$; its words
have weight equal to the number of coordinates that are one.
Code sums and invariants of the Clifford groups considered in
\cite[Sections~3 and~6]{NRS00} provide classical background. Here all
properties needed for the exterior product are proved directly.
Put $\mathcal C^\perp=\{y\in\mathbb F_2^8:y\cdot c=0\text{ for all }c\in\mathcal C\}$,
where the dot product is over $\mathbb F_2$.
The code $\mathcal C$ has dimension four and weights $0,4,8$.
It is self-orthogonal, meaning $\mathcal C\subseteq\mathcal C^\perp$:
the product of two affine functions has degree at most two, and its sum
over $\mathbb F_2^3$ is zero modulo two. Its dimension then implies
$\mathcal C=\mathcal C^\perp$. Thus it is self-dual (equality with its
dual) and doubly even (every word has weight divisible by four).

Regroup the tensor factors of $\lvert \mathcal C\rangle^{\otimes n}$ into eight groups of $n$ qubit factors. The resulting
vector is
\begin{equation*}
 \widetilde{\varpi}_n=\sum_{b\in\mathbb F_2^n}\ \sum_{L:\mathbb F_2^3\to\mathbb F_2^n\ \mathrm{linear}}
 \bigotimes_{x\in\mathbb F_2^3}\lvert b+Lx\rangle.
\end{equation*}
Let $\rho^{\wedge8}$ be the exterior-power representation on $\bigwedge^8\mathbb C^d$,
and define $\varpi_n$ to be the image of $\widetilde{\varpi}_n$ under the natural map
$v_1\otimes\cdots\otimes v_8\mapsto v_1\wedge\cdots\wedge v_8$.
The orientation of each wedge uses the stated order of $\mathbb F_2^3$;
consistency means independence of the affine parametrization.

\begin{lemma}\label{lem:exactgap-affine-three-flats}
For $n\ge3$, $\varpi_n\ne0$. After division by a common factor $1344$, it is the sum of one
consistently oriented wedge for each affine three-flat in $\mathbb F_2^n$.
\end{lemma}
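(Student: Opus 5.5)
Under the natural map $v_1\otimes\cdots\otimes v_8\mapsto v_1\wedge\cdots\wedge v_8$, a term $\bigotimes_x\lvert b+Lx\rangle$ vanishes unless the eight vectors $b+Lx$ are distinct. Since they are standard basis vectors, distinctness holds exactly when $L$ is injective. So only pairs $(b,L)$ with $L$ injective survive. For such a pair, the image $b+L(\mathbb F_2^3)$ is an affine three-flat $S$. The plan is to group the surviving terms by $S$, show that each group gives a fixed nonzero wedge of the eight basis vectors indexed by $S$ times the same integer $1344$, and then conclude $\varpi_n\neq0$ because wedges of distinct $8$-subsets of the basis are linearly independent. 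Since $n\ge3$, at least one three-flat exists, so the sum is nonzero.

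\emph{Counting.} Fix a flat $S$. The pairs $(b,L)$ with image $S$ correspond bijectively to affine bijections $\mathbb F_2^3\to S$, namely $x\mapsto b+Lx$. After choosing one reference bijection $\phi_0$, these correspond to the affine group $\mathrm{AGL}(3,2)$, which has order $8\cdot168=1344$. The term for $\phi=\phi_0\circ g$ with $g\in\mathrm{AGL}(3,2)$ is the reference wedge $\bigwedge_x\lvert\phi_0(x)\rangle$ with its factors permuted by $g$, viewed as a permutation of the eight positions in lexicographic order. Hence it equals $\operatorname{sgn}(g)$ times the reference wedge. Therefore the group for $S$ sums to $\bigl(\sum_{g}\operatorname{sgn}(g)\bigr)$ times the reference wedge. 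This gives the factor $1344$, with the orientation independent of the parametrization, provided every $g\in\mathrm{AGL}(3,2)$ acts on the eight points as an even permutation. That is exactly the claimed consistency.

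\emph{Main step: evenness of $\mathrm{AGL}(3,2)$ on $\mathbb F_2^3$.} I would check this on generators.
\begin{itemize}
\item A translation $x\mapsto x+t$ with $t\ne0$ is a product of four disjoint transpositions, hence even.
\item The linear part $\mathrm{GL}(3,2)$ is generated by elementary transvections $x\mapsto x+x_i e_j$ with $i\ne j$. Such a map fixes the four points with $x_i=0$ and swaps the remaining four in two pairs, so it is even.
\end{itemize}
Alternatively, $\mathrm{GL}(3,2)$ is simple of order $168$, so its sign character is trivial. Either argument shows every $g$ is even, so every term in a group equals the reference wedge.

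This evenness is the one genuinely nontrivial point; the rest is bookkeeping. A secondary check is that distinct flats give distinct $8$-subsets and hence linearly independent wedges. This holds trivially because a flat is determined by its point set.
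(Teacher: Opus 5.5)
Your proof is correct and follows the paper's argument essentially step for step. You discard noninjective $L$, group the surviving terms by flat as an $\operatorname{AGL}(3,2)$-torsor of size $1344$, and verify evenness on translations (four transpositions) and on elementary transvections (two transpositions); you also make explicit that wedges of distinct $8$-subsets of the basis are linearly independent and that a three-flat exists for $n\ge3$, which the paper leaves implicit when concluding $\varpi_n\neq0$.
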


\begin{proof}
Noninjective $L$ give repeated tensor factors and vanish under this map to the exterior power. Each injective
$L$ gives the eight vertices of an affine three-flat. Two parametrizations of the same flat differ by
an element of $\operatorname{AGL}(3,2)$, the group of maps $x\mapsto Ax+b$
with $A$ an invertible $3\times3$ matrix over $\mathbb F_2$ and $b\in\mathbb F_2^3$. Every such permutation of eight vertices is even: a nonzero translation
is a product of four transpositions, and a coordinate map $x_i\mapsto x_i+x_j$ ($i\ne j$), leaving the
other coordinates fixed, is a product of two transpositions. These generate the affine group. Therefore parametrizations do not cancel. Each
flat has $|\operatorname{AGL}(3,2)|=8\cdot168=1344$ parametrizations.
\end{proof}

Let
\[
 h=2^{-1/2}\begin{pmatrix}1&1\\1&-1\end{pmatrix},
 \qquad
 s=\operatorname{diag}(1,\mathrm{i})
\]
be the Hadamard and phase matrices. For $y\in\mathbb F_2^8$, the coefficient
of $\lvert y\rangle$ in $h^{\otimes8}\lvert\mathcal C\rangle$ is
\[
 2^{-4}\sum_{c\in\mathcal C}(-1)^{c\cdot y}.
\]
It equals $|\mathcal C|/2^4=1$ for $y\in\mathcal C^\perp=\mathcal C$ and
zero otherwise, so $h^{\otimes8}\lvert\mathcal C\rangle=\lvert\mathcal C\rangle$. Doubly even weights give $s^{\otimes8}\lvert \mathcal C\rangle=\lvert \mathcal C\rangle$. Applying a
controlled-NOT (CNOT) gate to each pair of corresponding coordinates in two
eight-qubit codeword factors maps $(c_1,c_2)$ to $(c_1,c_1+c_2)$.
This is a bijection of $\mathcal C\times\mathcal C$ because $\mathcal C$ is linear,
so it also preserves the sum. It follows that $\widetilde{\varpi}_n$, and hence $\varpi_n$, is fixed by the eightfold action of
every composition of these operators.

The matrix $e^{\mathrm{i}\pi\sigma^z/4}=e^{\mathrm{i}\pi/4}s^\dagger$ also fixes this eightfold tensor: its extra scalar has
eighth power one. Standard Hadamard, phase and CNOT conjugations carry $\sigma^z$ on one qubit to any
Pauli operator in $\mathcal P_n$, up to sign. For example, CNOTs build any tensor product of $\mathbf{1}_2$ and $\sigma^z$
with at least one nonidentity factor, and single-qubit
gates convert its nonidentity factors to $\sigma^x,\sigma^y$ or $\sigma^z$. Thus
\begin{equation}\label{eq:exactgap-pi4-fixed}
 \rho^{\wedge8}(e^{\mathrm{i}\pi P/4})\varpi_n=\varpi_n
 \qquad(P\in\mathcal P_n).
\end{equation}

\subsection{Computing the exterior-power eigenvalue}

The quadratic Casimir operator used below is the sum
$\sum_{P\in\mathcal P_n}J_P^2$.
Its normalization is fixed by
$\operatorname{tr}(PQ)=d\,\delta_{P,Q}$ for $P,Q\in\mathcal P_n$,
where $\delta_{P,Q}$ is the Kronecker delta: the matrices $P/\sqrt d$
are orthonormal for the trace inner product $(A,B)\mapsto\operatorname{tr}(AB)$ on
traceless Hermitian matrices.
Thus this sum is $d$ times the Casimir operator for a trace-orthonormal basis.

On the exterior power, $J_P=\sum_{r=1}^8P^{(r)}$, with $P^{(r)}$ acting on
tensor factor $r$ and the sum restricted to the antisymmetric tensor space.
Its possible eigenvalues are $0,\pm2,\pm4,\pm6,\pm8$.
Equation~\eqref{eq:exactgap-pi4-fixed} restricts those occurring in
$\varpi_n$ to $0,\pm8$. Since $F_P$ projects onto the nonzero eigenspaces of $J_P$,
\begin{equation*}
 F_P\varpi_n=\frac{J_P^2}{64}\varpi_n.
\end{equation*}
The one-qubit identity
$\mathbf{1}_2\otimes\mathbf{1}_2+\sigma^x\otimes\sigma^x+\sigma^y\otimes\sigma^y+\sigma^z\otimes\sigma^z=2T$, tensored over $n$ qubits, implies
\begin{equation}\label{eq:exactgap-pauli-swap}
 \sum_{P\in\mathcal P_n}P\otimes P=dT-\mathbf 1.
\end{equation}
Here $T(v\otimes w)=w\otimes v$, and $T_{\{r,s\}}$ denotes the corresponding
transposition of tensor factors $r$ and $s$. On the exterior-eight subspace every transposition has
eigenvalue $-1$. Therefore
\begin{equation*}
 \sum_{P\in\mathcal P_n}J_P^2
 =8(d^2-1)\mathbf{1}+2\sum_{1\le r<s\le8}(dT_{\{r,s\}}-\mathbf 1)
 =8(d-8)(d+1)\mathbf{1}.
\end{equation*}
Together with the preceding formula, this gives the exact eigenvector equation
\begin{equation*}
 \mathcal H\varpi_n=\frac{(d-8)(d+1)}8\varpi_n.
\end{equation*}
For $d>8$, $\bigwedge^8\mathbb C^d$ is a nontrivial irreducible $\mathsf{SU}(d)$ representation. Dividing by $|\mathcal P_n|=4^n-1=d^2-1$ and combining with \eqref{eq:exactgap-su-lower} proves the special-unitary assertion of Theorem~\ref{thm:exactgap-main}.

\section{The projective spectral gap}\label{sec:exactgap-pu-proof}

\subsection{Reduction to the subgroup-fixed space}\label{subsec:exactgap-subgroup}

Let $\mathsf{Cl}(n)\le\mathsf{SU}(d)$ be the special Clifford group of
\cite[Definition~1.2]{BH26}. Define
\begin{equation*}
 \mathsf{Cl}_{\pi/4}(n)=\langle e^{\mathrm{i}\pi P/4}:P\in\mathcal P_n\rangle\le \mathsf{Cl}(n)\le \mathsf{SU}(d).
\end{equation*}
Each generator conjugates the chosen Hermitian Pauli representatives by a signed permutation:
for $k\in\mathsf{Cl}_{\pi/4}(n)$ and $P\in\mathcal P_n$, one has $kPk^{-1}=\pm P'$ for some $P'\in\mathcal P_n$. Conjugation transports the averaging projectors according to
$\rho(k)\Pi(\rho|\Theta,P)\rho(k)^{-1}=\Pi(\rho|\Theta,kPk^{-1})$.
Moreover,
\[
 \Pi(\rho|\Theta,-P)=\Pi(\rho|\Theta,P)
 \qquad\bigl(\Theta\in\{\Theta_8,\Theta_{16},\Theta_\infty\}\bigr).
\]
These covariance and sign identities show that the sums of these projectors, their complements, and their differences over $\mathcal P_n$ commute with $\mathsf{Cl}_{\pi/4}(n)$. We never identify $\mathsf{Cl}_{\pi/4}(n)$ with the
special Clifford group $\mathsf{Cl}(n)$; the distinction in central phases is essential below.

For $n\ge4$, this is a proper subgroup of $\mathsf{Cl}(n)$:
by \eqref{eq:exactgap-pi4-fixed} it fixes $\varpi_n$, whereas
$e^{2\pi\mathrm{i}/d}\mathbf{1}\in\mathsf{Cl}(n)$ does not. No assertion
that these rotations generate $\mathsf{Cl}(n)$ is used.

Define
\begin{equation*}
 \mathcal H_8=\sum_{P\in\mathcal P_n}\bigl(\mathbf{1}-\Pi(\rho|\Theta_8,P)\bigr).
\end{equation*}
The positive projectors satisfy
\[
 \Pi(\rho|\Theta_\infty,P)
 \preceq \Pi(\rho|\Theta_{16},P)
 \preceq \Pi(\rho|\Theta_8,P).
\]
Also $\ker \mathcal H_8=V^{\mathsf{Cl}_{\pi/4}(n)}$ and $\mathcal H\succeq \mathcal H_8$.

By the second inequality in \eqref{eq:exactgap-local-inputs}, the
same triangle count gives
$\sum_{PQ=-QP}(\mathbf{1}-\Pi(\rho|\Theta_8,P))
(\mathbf{1}-\Pi(\rho|\Theta_8,Q))\succeq(d^2/8)\mathcal H_8$.
The distinct commuting terms are nonnegative, so
\begin{equation*}
 \mathcal H_8^2\succeq(1+d^2/8)\mathcal H_8.
\end{equation*}
Both $\mathcal H$ and $\mathcal H_8$ preserve $V^{\mathsf{Cl}_{\pi/4}(n)}$ and its orthogonal complement. It follows that on
$(V^{\mathsf{Cl}_{\pi/4}(n)})^\perp$,
\begin{equation}\label{eq:exactgap-complement}
 \mathcal H\succeq \mathcal H_8\succeq(1+d^2/8)\mathbf{1}\succ\frac{d(d-3)}8\mathbf{1}.
\end{equation}
Only the $\mathsf{Cl}_{\pi/4}(n)$-fixed space remains to be treated.

\subsection{Odd and even multiples of eight}\label{subsec:exactgap-residue}

Define the following refinements of $F_P$ and $\mathcal H$:
\begin{equation*}
 \begin{aligned}
 F_P^{\mathrm{odd}}&:=\Pi(\rho|\Theta_8,P)-\Pi(\rho|\Theta_{16},P),&
 \mathcal H^{\mathrm{odd}}&:=\sum_{P\in\mathcal P_n}F_P^{\mathrm{odd}},\\
 F_P^{\mathrm{even}}&:=\Pi(\rho|\Theta_{16},P)-\Pi(\rho|\Theta_\infty,P),&
 \mathcal H^{\mathrm{even}}&:=\sum_{P\in\mathcal P_n}F_P^{\mathrm{even}}.
 \end{aligned}
\end{equation*}
If $\rho(e^{\mathrm{i}\theta P})=e^{\mathrm{i}\theta J_P}$, then $F_P^{\mathrm{odd}}$ projects onto the $J_P$-eigenspaces whose eigenvalues are congruent to $8$ modulo
$16$, whereas $F_P^{\mathrm{even}}$ projects onto the nonzero $J_P$-eigenspaces whose eigenvalues are divisible by $16$. Every vector in $V^{\mathsf{Cl}_{\pi/4}(n)}$ has nonzero components only
in $J_P$-eigenspaces with eigenvalues divisible by eight. On that fixed
space,
\begin{equation*}
 \mathcal H=\mathcal H^{\mathrm{odd}}+\mathcal H^{\mathrm{even}}.
\end{equation*}
Let
\begin{equation*}
 \mathcal R_{\mathrm{com}}=\sum_{\substack{P\ne Q\\PQ=QP}}F_PF_Q,
 \qquad
 \mathcal D_{\triangle}=\sum_T\left(\mathcal H_T^2-\frac54\mathcal H_T\right),
\end{equation*}
where the sum defining $\mathcal D_{\triangle}$ runs over all
$\mathfrak{su}(2)$-triangles. Both operators are positive semidefinite, and there is an exact identity
\begin{equation}\label{eq:exactgap-H-square}
 \mathcal H^2=(1+d^2/16)\mathcal H+\mathcal R_{\mathrm{com}}+\mathcal D_{\triangle}.
\end{equation}
If $Z(\mathsf{SU}(d))$ acts trivially on $V$, we will establish on
$V^{\mathsf{Cl}_{\pi/4}(n)}$ the inequalities
\begin{equation}\label{eq:exactgap-projective-estimates}
 \mathcal R_{\mathrm{com}}\succeq\frac{(d-8)(d+2)}{16}\mathcal H^{\mathrm{odd}},
 \qquad
 \mathcal D_{\triangle}\succeq\frac{d^2}{16}\mathcal H^{\mathrm{even}}.
\end{equation}
The two estimates control these two residue classes; neither is being asserted termwise for individual
Pauli operators.

\subsection{Commuting terms and binary polynomials}\label{subsec:exactgap-commuting}

For a function $f$ on a finite set, write
$\operatorname{wt}(f)=|\{x:f(x)\ne0\}|$.

\begin{lemma}[A weight bound for polynomials over $\mathbb F_2$]\label{lem:exactgap-boolean-weight}
A nonzero function $q:\mathbb F_2^n\to\mathbb F_2$ represented by a polynomial of degree at most one in each variable
and total degree at most $r\le n$ is nonzero at at least $2^{n-r}$ points.
\end{lemma}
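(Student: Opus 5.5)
The proof will go by induction on $n$, following the standard minimum-distance argument for Reed--Muller codes. Since $q$ is multilinear, the representing polynomial is unique: a nonzero function has a nonzero multilinear representative and conversely. The base cases are easy. If $r=0$, then $q$ is the nonzero constant $1$ and has weight $2^n$. If $n=0$, then $r=0$ as well.

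For the inductive step with $n\ge1$ and $r\ge1$, I will single out the variable $x_n$ and write
\[
 q(x',x_n)=q_0(x')+x_n q_1(x'),\qquad x'\in\mathbb F_2^{n-1},
\]
where $q_0$ and $q_1$ are multilinear in $x'$. Here $\deg q_0\le r$ and $\deg q_1\le r-1$. The restrictions of $q$ to the two hyperplanes $x_n=0$ and $x_n=1$ are $q_0$ and $q_0+q_1$, so
\[
 \operatorname{wt}(q)=\operatorname{wt}(q_0)+\operatorname{wt}(q_0+q_1).
\]
The argument then splits into three cases.

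\emph{Case $q_1=0$.} Then $q=q_0$ does not depend on $x_n$, and $q_0\ne0$. If $r\le n-1$, the induction hypothesis gives $\operatorname{wt}(q_0)\ge 2^{n-1-r}$, so $\operatorname{wt}(q)=2\operatorname{wt}(q_0)\ge 2^{n-r}$. If $r=n$, the bound $2^{n-r}=1$ holds trivially because $q\ne0$. To avoid this edge case, I could instead apply induction in the form ``degree $\le\min(r,n-1)$'', which gives a weight of at least $2^{n-1-\min(r,n-1)}$ on each half.

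\emph{Case $q_1\ne0$ and $q_0=0$.} Then $\operatorname{wt}(q)=\operatorname{wt}(q_1)$. Since $\deg q_1\le r-1\le n-1$, induction on $n-1$ variables gives $\operatorname{wt}(q_1)\ge 2^{(n-1)-(r-1)}=2^{n-r}$.

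\emph{Case $q_1\ne0$ and $q_0\ne0$.} If $q_0+q_1=0$, then $\operatorname{wt}(q)=\operatorname{wt}(q_0)=\operatorname{wt}(q_1)$, and since $q_1$ has degree at most $r-1$ we are done as in the previous case. Otherwise both $q_0$ and $q_0+q_1$ are nonzero of degree at most $\min(r,n-1)$ in $n-1$ variables, and each has weight at least $2^{n-1-r}$ when $r\le n-1$. Summing the two halves gives $2^{n-r}$. When $r=n$ the claim is trivial.

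The only delicate point is bookkeeping. The degree bound must be kept at most the number of remaining variables, and the $r=n$ situations have to be dispatched as trivial. No further obstacle is expected. An alternative route is to choose a maximal-degree monomial $x_S$ with $|S|=s\le r$ and fix the variables outside $S$ so that the coefficient of $x_S$ stays nonzero. Each of the $2^{n-s}$ assignments outside $S$ leaves a nonzero polynomial on $\mathbb F_2^S$, and that polynomial has a point where it is nonzero. This gives the weight bound directly, and I would mention it as a cleaner alternative.
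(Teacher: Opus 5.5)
Your proof is correct and follows the paper's route: induction on $n$ via $q=q_0+x_nq_1$, with the $q_1=0$ case handled by doubling and the $r=n$ edge case dismissed as trivial. The only difference is that the paper handles every $q_1\ne0$ case at once via $\operatorname{wt}(q_0)+\operatorname{wt}(q_0+q_1)\ge\operatorname{wt}(q_1)\ge2^{n-r}$, which makes your subcases on whether $q_0$ or $q_0+q_1$ vanishes unnecessary.
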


\begin{proof}
Use induction and write $q(x,t)=g(x)+th(x)$. If $h\ne0$, then $\deg h\le r-1$ and
$\operatorname{wt}(g)+\operatorname{wt}(g+h)\ge\operatorname{wt}(h)\ge2^{n-r}$. If $h=0$, the weight is $2\operatorname{wt}(g)$ and the induction
bound gives the claim (with the trivial lower bound $1$ when the degree bound reaches the number
of variables). This also covers constant nonzero functions.
\end{proof}

\begin{lemma}[Reduction modulo two has degree at most three]\label{lem:exactgap-integral-cubic}
Let $w=(w_x)_{x\in\mathbb F_2^n}\in\mathbb Z^d$ have $\sum_xw_x=0$. Suppose
\begin{equation*}
 a_p=\sum_{x\in\mathbb F_2^n}w_x(-1)^{p\cdot x}\in8\mathbb Z
 \qquad\text{for every }p\in\mathbb F_2^n.
\end{equation*}
Then $q(p)=a_p/8\pmod2$ is represented over $\mathbb F_2$ by a
polynomial of degree at most three, and $q(0)=0$.
\end{lemma}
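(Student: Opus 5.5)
The plan is to expand the character $(-1)^{p\cdot x}$ as an integer polynomial in the bits of $p$, so that $a_p$ becomes an integer multilinear polynomial in $p$, and then to read off which monomials can survive after dividing by $8$ and reducing modulo two. The assertion $q(0)=0$ is immediate: $a_0=\sum_xw_x=0$.

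For $p,x\in\{0,1\}^n$, regarded as integer vectors, each factor $(-1)^{p_ix_i}$ equals $1-2p_ix_i$. Multiplying over $i$ and expanding gives, as an identity of integers,
\begin{equation*}
 (-1)^{p\cdot x}=\prod_{i=1}^n(1-2p_ix_i)=\sum_{S\subseteq\{1,\dots,n\}}(-2)^{|S|}p^Sx^S,
 \qquad p^S=\prod_{i\in S}p_i .
\end{equation*}
Summing against $w$ yields
\begin{equation*}
 a_p=\sum_S c_Sp^S,\qquad c_S=(-2)^{|S|}W_S,\qquad W_S=\sum_xw_xx^S\in\mathbb Z .
\end{equation*}
Thus $p\mapsto a_p$ is an integer-valued function on $\{0,1\}^n$ with integer multilinear expansion coefficients $c_S$.

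Next I would use that multilinear coefficients are recovered from values by Möbius inversion:
\begin{equation*}
 c_S=\sum_{T\subseteq S}(-1)^{|S|-|T|}a_{1_T},
\end{equation*}
where $1_T$ is the indicator vector of $T$. This follows from uniqueness of the multilinear representation of a function on $\{0,1\}^n$. Since every $a_{1_T}\in8\mathbb Z$, each $c_S$ is divisible by $8$. Hence
\begin{equation*}
 \frac{a_p}{8}=\sum_S\frac{c_S}{8}p^S
\end{equation*}
is an integer multilinear polynomial in $p$ with integer coefficients.

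Finally, for $|S|\ge4$ we have
\begin{equation*}
 \frac{c_S}{8}=\pm2^{|S|-3}W_S\in2\mathbb Z,
\end{equation*}
so these terms vanish modulo two. Reducing modulo two therefore leaves
\begin{equation*}
 q(p)=\sum_{|S|\le3}\frac{c_S}{8}p^S \pmod 2,
\end{equation*}
which is a polynomial over $\mathbb F_2$ of degree at most one in each variable and total degree at most three.

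The only step needing care is the integrality of $c_S/8$ for $|S|\le3$. It is not visible from the formula $c_S=(-2)^{|S|}W_S$ alone, since for small $|S|$ the power of two is too small, and must instead come from the Möbius inversion applied to the hypothesis $a_p\in8\mathbb Z$. I do not expect any further obstacle.
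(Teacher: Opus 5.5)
Your proposal is correct and follows essentially the same route as the paper. Both expand $a_p$ as an integer multilinear polynomial via $\prod_{i:x_i=1}(1-2p_i)$, use inclusion--exclusion (your Möbius inversion) to get divisibility of every coefficient by eight, note that coefficients of degree at least four carry a factor $2^{|S|}$ and so vanish after dividing by eight and reducing modulo two, and obtain $q(0)=0$ from $a_0=\sum_x w_x=0$.
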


\begin{proof}
For binary coordinates $p_i$, the integer polynomial of degree at most one in each variable
representing $a$ is
\begin{equation}\label{eq:exactgap-walsh-polynomial}
 a(p)=\sum_xw_x\prod_{i:x_i=1}(1-2p_i).
\end{equation}
Every coefficient is divisible by eight, since each is an integer linear combination of values of $a$, by inclusion--exclusion.
A coefficient of degree at least four is divisible by sixteen directly from \eqref{eq:exactgap-walsh-polynomial}. Divide by eight
and reduce modulo two. All terms of degree at least four disappear. Finally
$a_0=\sum_xw_x=0$.
\end{proof}

\begin{proposition}[Commuting-pair bound]\label{prop:exactgap-commuting}
In every representation of $\mathsf{PU}(d)$, for $n\ge3$, the first inequality in \eqref{eq:exactgap-projective-estimates} holds on $V^{\mathsf{Cl}_{\pi/4}(n)}$.
\end{proposition}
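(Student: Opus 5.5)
The plan is to decompose the commuting-pair sum over Lagrangian subspaces of $\mathbb F_2^{2n}$, that is, over maximal sets of pairwise commuting Pauli operators. On each Lagrangian the relevant operators are simultaneously diagonal. There I will reduce to a counting statement about the torus weights and close it with Lemmas~\ref{lem:exactgap-integral-cubic} and~\ref{lem:exactgap-boolean-weight}.

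\emph{Step 1 (averaging over Lagrangians).} For an isotropic subspace of dimension $k$, the number of Lagrangian subspaces containing it is $N_k=\prod_{i=1}^{n-k}(2^i+1)$, which depends only on $k$. Two distinct nonidentity labels $P\ne Q$ with $PQ=QP$ span an isotropic plane, and a single label spans an isotropic line. Hence
\[
 \mathcal R_{\mathrm{com}}=\frac1{N_2}\sum_L\sum_{\substack{P\ne Q\\P,Q\in L}}F_PF_Q,
 \qquad
 \mathcal H^{\mathrm{odd}}=\frac1{N_1}\sum_L\sum_{P\in L}F_P^{\mathrm{odd}},
\]
where $L$ runs over Lagrangians and $P\in L$ means the nonzero labels of $L$. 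Since $N_1/N_2=2^{n-1}+1=(d+2)/2$, it suffices to prove, for each fixed $L$ and on $V^{\mathsf{Cl}_{\pi/4}(n)}$,
\[
 \sum_{\substack{P\ne Q\\P,Q\in L}}F_PF_Q\succeq\frac{d-8}{8}\sum_{P\in L}F_P^{\mathrm{odd}}.
\]
Multiplying by $N_1/N_2$ and averaging then gives exactly $(d-8)(d+2)/16$.

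\emph{Step 2 (weights).} Fix $L$. The matrices $\{P:P\in L\}$ commute and are simultaneously diagonal in a basis $(e_x)_{x\in\mathbb F_2^n}$ of $\mathbb C^d$, with $P=\pm\mathrm{diag}((-1)^{p\cdot x})$ for a bijection $P\leftrightarrow p\in\mathbb F_2^n\setminus\{0\}$. Decompose $V$ into weight spaces of the corresponding maximal torus of $\mathsf{SU}(d)$.

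Here the projective hypothesis enters. Because the center acts trivially, each weight lies in the root lattice and can be represented by $w\in\mathbb Z^d$ with $\sum_xw_x=0$. On the weight space for $w$, $J_P$ acts by the scalar $\pm a_p$, where $a_p=\sum_xw_x(-1)^{p\cdot x}$.

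Every $F_P$, $F_P^{\mathrm{odd}}$ and $\rho(e^{\mathrm i\pi P/4})$ with $P\in L$ lies in the commutative algebra generated by the torus. Consequently a vector $v\in V^{\mathsf{Cl}_{\pi/4}(n)}$ has weight components $v_w$ that are each fixed by every $\rho(e^{\mathrm i\pi P/4})$, $P\in L$. So $a_p\in8\mathbb Z$ for all $p$ whenever $v_w\ne0$. Moreover, both sides of the per-$L$ inequality are diagonal in the weight decomposition, so it suffices to check it scalarwise on each such $w$.

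\emph{Step 3 (counting).} Put $S=\{p\ne0:a_p\ne0\}$ and $O=\{p:a_p\equiv8\pmod{16}\}\subseteq S$. On the weight space for $w$, the left side is the scalar $|S|(|S|-1)$ and the right side is $\tfrac{d-8}{8}|O|$.

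By Lemma~\ref{lem:exactgap-integral-cubic}, $O$ is the support of a function $q$ of degree at most three with $q(0)=0$. If $O=\varnothing$ the inequality is trivial. Otherwise Lemma~\ref{lem:exactgap-boolean-weight} gives $|O|\ge2^{n-3}=d/8$. Then $|S|\ge|O|$ and $|S|-1\ge d/8-1$ yield
\[
 |S|(|S|-1)\ge|O|\,\frac{d-8}{8},
\]
which is the required scalar inequality.

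\emph{Main obstacle.} The delicate point is Step~2, not the final count. I must check that the normalization of weights to sum zero, which uses trivial central action, is what makes Lemma~\ref{lem:exactgap-integral-cubic} applicable. Without it, $w$ is only defined modulo $(1,\dots,1)$ and $a_0$ need not vanish. I must also verify carefully that the Clifford-fixed condition passes to each individual weight component, and that the sign ambiguities $P\mapsto -P$ are harmless, which holds since $\Pi(\rho|\Theta,-P)=\Pi(\rho|\Theta,P)$. The Lagrangian counting $N_k$ is standard, but I would verify the ratio $N_1/N_2=(d+2)/2$ explicitly, since it is what produces the factor $d+2$ in the statement.
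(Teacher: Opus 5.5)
Your proposal is correct and follows essentially the same route as the paper: you decompose over maximal commuting (Lagrangian) sets, use trivial central action to obtain a zero-sum integral weight, and combine Lemmas~\ref{lem:exactgap-integral-cubic} and~\ref{lem:exactgap-boolean-weight} to get the scalar inequality $k(k-1)\ge(d/8-1)k_o$ on each weight space. The only difference is cosmetic: you get the factor $(d+2)/2$ from the explicit count $N_k=\prod_{i=1}^{n-k}(2^i+1)$ of Lagrangians containing an isotropic subspace, whereas the paper obtains $r_c/r_1=2/(d+2)$ by double counting, using transitivity of $\operatorname{Sp}(2n,\mathbb F_2)$.
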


\begin{proof}
For $p=(p_1,\ldots,p_n)\in\mathbb F_2^n$, put
$\sigma^z(p)=\bigotimes_{j=1}^n(\sigma^z)^{p_j}$.
Choose $L\subseteq\mathcal P_n$ maximal under inclusion among
pairwise commuting subsets. First take the diagonal one,
$L=\{\sigma^z(p):p\in\mathbb F_2^n\setminus\{0\}\}$. Simultaneously diagonalize the action of the determinant-one diagonal
matrices. On each common eigenspace, $\operatorname{diag}(t_x)$ acts by
$\prod_x t_x^{m_x}$, with $m\in\mathbb Z^d$ determined modulo adding a
constant integer vector. The scalar $e^{2\pi\mathrm{i}/d}\mathbf{1}$ acts
there by $e^{2\pi\mathrm{i}\sum_xm_x/d}$. Because the representation
factors through $\mathsf{PU}(d)$, trivial central action implies
$\sum_xm_x\in d\mathbb Z$. Thus
$w=m-(\sum_xm_x/d)\mathbf{1}$ is an integral representative of sum zero.

A vector fixed by all $e^{\mathrm{i}\pi\sigma^z(p)/4}$ has nonzero components only in common eigenspaces for which
$a_p=\sum_xw_x(-1)^{p\cdot x}\in8\mathbb Z$ for every $p\ne0$. The value $a_0=0$ adds the missing condition at zero. On one such common eigenspace define
\[
 k=|\{p\ne0:a_p\ne0\}|,
 \qquad
 k_o=|\{p\ne0:a_p/8\text{ is odd}\}|.
\]
If $k_o>0$, Lemmas~\ref{lem:exactgap-boolean-weight} and~\ref{lem:exactgap-integral-cubic}, including $q(0)=0$, imply $k\ge k_o\ge d/8$. Therefore,
also when $k_o=0$,
\begin{equation}\label{eq:exactgap-commuting-weight}
 k(k-1)\ge(d/8-1)k_o.
\end{equation}
For $\mathcal H_L=\sum_{P\in L}F_P$, the left side is the eigenvalue of $\mathcal H_L^2-\mathcal H_L$, and $k_o$ is the
eigenvalue of $\sum_{P\in L}F_P^{\mathrm{odd}}$. This proves the corresponding quadratic-form inequality on
$V^{\mathsf{Cl}_{\pi/4}(n)}$.

The same calculation applies to every maximal commuting set.
Choose $n$ members with linearly independent Pauli labels. Their joint
eigenspaces are one-dimensional: each joint spectral projection has trace
one, since every nonidentity product of these generators is traceless.
In a joint eigenbasis their products are the operators $\sigma^z(p)$, up to
signs. The change of basis can be chosen in $\mathsf{SU}(d)$, so the
integral zero-sum description of the diagonal action still applies.
The calculation only requires that the vector be fixed by the
$\pi/4$ rotations from the chosen commuting set.
Changing the sign of a Pauli representative does not affect $F_P$ or
$F_P^{\mathrm{odd}}$.

In terms of labels, maximal commuting sets are the nonzero vectors of
$n$-dimensional subspaces on which the symplectic form vanishes.
The group $\operatorname{Sp}(2n,\mathbb F_2)$ of linear maps preserving
this form permutes these subspaces. It is transitive on nonzero labels
and on ordered pairs of distinct nonzero labels with pairing zero.
Indeed, such a pair is linearly independent over $\mathbb F_2$, and either
a single nonzero label or such a pair can be extended to a symplectic
basis $e_1,\ldots,e_n,f_1,\ldots,f_n$ with
$\langle e_i,e_j\rangle=\langle f_i,f_j\rangle=0$ and
$\langle e_i,f_j\rangle=\delta_{ij}$; mapping one such basis to another
gives the required transformation.
Let $r_1$ count the maximal commuting sets containing one given Pauli operator,
and $r_c$ count those containing a given ordered distinct commuting pair.
The preceding actions make these counts independent of the chosen operator
or pair. Double counting, with a Pauli operator fixed, gives
\[
 r_1(d-2)=r_c(d^2/2-2),
 \qquad
 \frac{r_c}{r_1}=\frac2{d+2}.
\]
Summing \eqref{eq:exactgap-commuting-weight} over maximal commuting sets yields
\[
 r_c\mathcal R_{\mathrm{com}}\succeq r_1(d/8-1)\mathcal H^{\mathrm{odd}}.
\]
Dividing proves the claimed coefficient $(d-8)(d+2)/16$.
\end{proof}

\subsection{A local inequality on \texorpdfstring{$\mathfrak{su}(2)$}{su(2)}-triangles}\label{subsec:exactgap-cl1}

\begin{lemma}[$\mathsf{Cl}(1)$-fixed-space inequality]\label{lem:exactgap-cl1-local}
In any $\mathsf{SU}(2)$ representation, write $F_x=F_{\sigma^x}$, $F_y=F_{\sigma^y}$,
$F_z=F_{\sigma^z}$ and $F_x^{\mathrm{even}}=F_{\sigma^x}^{\mathrm{even}}$,
$F_y^{\mathrm{even}}=F_{\sigma^y}^{\mathrm{even}}$, $F_z^{\mathrm{even}}=F_{\sigma^z}^{\mathrm{even}}$. Let
$\mathcal H_T=F_x+F_y+F_z$ and $\mathcal H_T^{\mathrm{even}}=F_x^{\mathrm{even}}+F_y^{\mathrm{even}}+F_z^{\mathrm{even}}$. Set
\[
 \mathsf G_{\Theta_8}:=\langle e^{\mathrm{i}\pi\sigma^x/4},e^{\mathrm{i}\pi\sigma^y/4},e^{\mathrm{i}\pi\sigma^z/4}\rangle.
\]
Baer--Haah identify $\mathsf G_{\Theta_8}$ with their special Clifford group
$\mathsf{Cl}(1)$ in \cite[Lemma~3.10]{BH26}. On the $\mathsf{Cl}(1)$-fixed subspace,
\begin{equation}\label{eq:exactgap-cl1-local}
 \mathcal H_T^2-\frac54\mathcal H_T\succeq\frac14\mathcal H_T^{\mathrm{even}}.
\end{equation}
\end{lemma}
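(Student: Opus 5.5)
The plan is to decompose into $\mathsf{SU}(2)$-irreducibles and to treat large and small spins separately. Each $F_P$ and $F_P^{\mathrm{even}}$ is a function of $J_P$, so both $\mathcal H_T$ and $\mathcal H_T^{\mathrm{even}}$ preserve every $\mathsf{SU}(2)$-isotypic component. They also commute with $\mathsf{Cl}(1)$, because conjugation by the generators permutes $\sigma^x,\sigma^y,\sigma^z$ up to sign and $\Pi(\rho|\Theta,-P)=\Pi(\rho|\Theta,P)$. Hence both operators preserve the $\mathsf{Cl}(1)$-fixed subspace $W$ inside each isotypic component, and it suffices to prove \eqref{eq:exactgap-cl1-local} on $W\cap V_j$ for the spin-$j$ irreducible $V_j$, of dimension $2j+1$.

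On $V_j$ the eigenvalues of $J_P$ are $2m$ with $m\in\{-j,\dots,j\}$. On $W$, invariance under $e^{\mathrm{i}\pi P/4}$ forces only eigenvalues divisible by $8$, so only integer $j$ contribute. The case $j=0$ is trivial, since both sides vanish there.

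\textbf{Small spins, $1\le j\le 7$.} A nonzero $J_P$-eigenvalue divisible by $16$ requires $|m|\ge8$, so $F_P^{\mathrm{even}}=0$ on $V_j$. The right side of \eqref{eq:exactgap-cl1-local} then vanishes. The inequality reduces to $\mathcal H_T^2\succeq\frac54\mathcal H_T$, which is the first estimate in \eqref{eq:exactgap-local-inputs}.

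\textbf{Large spins, $j\ge8$.} Since $F^{\mathrm{even}}_P\preceq F_P$, we have $\mathcal H_T^{\mathrm{even}}\preceq\mathcal H_T$. It therefore suffices to prove the stronger statement $\mathcal H_T\succeq\frac32\mathbf 1$ on all of $V_j$. Then $\mathcal H_T^2-\frac54\mathcal H_T\succeq\frac14\mathcal H_T\succeq\frac14\mathcal H_T^{\mathrm{even}}$, because $\mathcal H_T$ commutes with its own square and its spectrum on $V_j$ lies in $[3/2,\infty)$.

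Write $F_P=\mathbf 1-\Pi_P$. Here $\Pi_P$ is the rank-one projection onto the $m=0$ vector $u_P$ of $J_P$, and these exist because $j$ is an integer. Then $\mathcal H_T=3\cdot\mathbf 1-(\Pi_x+\Pi_y+\Pi_z)$. The sum of three rank-one projections onto unit vectors with pairwise overlaps $|\langle u_P,u_Q\rangle|\le c$ has norm at most $1+2c$; this follows from the Gram matrix, whose off-diagonal entries are bounded by $c$. This gives $\mathcal H_T\succeq(2-2c)\mathbf 1$.

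The three zero-weight vectors are carried to one another by quarter-turn rotations about the coordinate axes. Their overlap is therefore the Wigner matrix element $d^j_{00}(\pi/2)=P_j(0)$, where $P_j$ is the Legendre polynomial. This element is $0$ for odd $j$. For even $j$ it equals $\binom{j}{j/2}2^{-j}$ in absolute value, which decreases in $j$ and is $6435/32768<1/4$ at $j=8$. Hence $c<1/4$, so $\mathcal H_T\succeq\frac32\mathbf 1$ on $V_j$ for every $j\ge8$.

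The main obstacle I expect is the overlap identification in the large-spin case. One must verify carefully that the $m=0$ vectors of $J_x,J_y,J_z$ are related by the rotation $e^{\mathrm{i}\pi\sigma^{\cdot}/4}$, up to phase, so that the overlap really is $|P_j(0)|$. One must also confirm that the normalization $J_P=2\times$(spin operator) does not shift which $m$ counts as zero. If the Legendre route proves awkward, I would instead bound the overlaps by Cauchy--Schwarz using $\langle u_z,J_x^2u_z\rangle=2j(j+1)$ in this normalization. That bound is weaker, but it may still give $c<1/4$ once $j$ is large, with the remaining finitely many $j$ handled by the explicit binomial values.
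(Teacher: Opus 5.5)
\textbf{There is a genuine gap, at spin $j=8$.} Most of your argument works. Half-integral spins drop out. For $1\le j\le 7$ the operator $\mathcal H_T^{\mathrm{even}}$ vanishes, so the cited $\frac54$ bound suffices. For odd $j\ge9$, or even $j\ge10$, the Gram-matrix estimate gives $\mathcal H_T\succeq(2-2|P_j(0)|)\mathbf 1\succ\frac32\mathbf 1$ on all of $V_j$, because $|P_{10}(0)|=63/256<\frac14$ and the sequence decreases. (This is organised differently from the paper. The paper computes the spectrum of $\mathcal H_T$ on the fixed space itself: it is $3$ for odd $j$; for even $j$ it is $2(1-u_j)$ on $\mathbb C(e_x+e_y+e_z)$ and $3$ on the complement.)

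\textbf{What goes wrong at $j=8$.} Here $|P_8(0)|=\binom84 2^{-8}=35/128>\frac14$. The value $6435/32768$ you quote is $\binom{16}{8}2^{-16}=|P_{16}(0)|$, as if $2j=16$ had been substituted. With the correct value your bound gives only $\mathcal H_T\succeq\frac{93}{64}\mathbf 1$.

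This is sharp, not an artifact of the estimate. Let $e_x,e_y,e_z$ be the three $m=0$ vectors and $u=e_x+e_y+e_z$. Then:
\begin{itemize}
\item $u$ is $\mathsf{Cl}(1)$-fixed, because for even $j$ the octahedral group permutes $e_x,e_y,e_z$ without signs.
\item $u$ spans the spin-eight fixed space, which is one-dimensional by the character count $(17-8+6+9)/24=1$.
\item $\mathcal H_T u=2(1-u_8)u=\frac{93}{64}u$.
\end{itemize}
On this line $\mathcal H_T^2-\frac54\mathcal H_T=\frac{1209}{4096}<\frac{1488}{4096}=\frac14\mathcal H_T$. So the step $\mathcal H_T^{\mathrm{even}}\preceq\mathcal H_T$ loses too much at $j=8$. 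No sharper overlap bound can rescue it, including your Cauchy--Schwarz fallback, because the intermediate claim $\mathcal H_T\succeq\frac32\mathbf 1$ on $V_8$ is simply false.

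\textbf{The missing ingredient} is an actual evaluation of $\mathcal H_T^{\mathrm{even}}$ on this line:
\begin{itemize}
\item $F_z^{\mathrm{even}}$ selects the weights $m=\pm8$.
\item The squared overlap of $e_x$ with each extremal weight vector is $D=\binom{16}{8}/4^8$. The overlap of $e_y$ is the same, with the same phase. The vector $e_z$ has no such component.
\item Hence $\langle u,F_z^{\mathrm{even}}u\rangle=8D$. Summing over the three axes and normalising gives $\mathcal H_T^{\mathrm{even}}=24D/\lVert u\rVert^2=24D/(3+6u_8)=\frac{65}{64}$ on the line.
\item The required inequality then holds with margin $\frac{93}{64}\bigl(\frac{93}{64}-\frac54\bigr)-\frac14\cdot\frac{65}{64}=\frac{169}{4096}>0$.
\end{itemize}

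This case is exactly where the $\mathsf{Cl}(1)$-fixed hypothesis does essential work. Your argument uses it only to exclude half-integral spins, so it cannot reach this case.
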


\begin{proof}
Decompose into spin-$j$ representations, $j\in\tfrac12\mathbb Z_{\ge0}$:
these act on homogeneous polynomials $f$ of degree $2j$ in $z\in\mathbb C^2$
by $(\rho_j(U)f)(z)=f(U^{-1}z)$. Write $J_a=2L_a$ for $a\in\{x,y,z\}$; in particular, the eigenvalues of $J_z$ are
$2m$, $m=-j,-j+1,\ldots,j$. Baer--Haah further show in \cite[Lemma~3.10]{BH26} that
$\mathsf{Cl}(1)$ has order $48$ and center $\{\pm\mathbf{1}\}$, and that
$\mathsf{Cl}(1)/\{\pm\mathbf{1}\}\cong\mathsf S_4$ is the rotational symmetry group of the octahedron.
Both $\mathcal H_T$ and $\mathcal H_T^{\mathrm{even}}$ preserve the fixed subspace.

Half-integral spins have no fixed vectors, because the central element $-\mathbf{1}$ acts as $-1$. In an odd
integral spin, a vector in the zero-eigenspace of any $J_a$ is odd under the $\pi$ rotation about a
perpendicular coordinate axis. Thus it is orthogonal to every $\mathsf{Cl}(1)$-fixed vector. On the fixed space
$\mathcal H_T=3\mathbf{1}$. Since $\mathcal H_T^{\mathrm{even}}\preceq \mathcal H_T$, \eqref{eq:exactgap-cl1-local} follows.

Suppose next that $j$ is even. Realize the spin-$j$ representation as the
complex space of degree-$j$ spherical harmonics: restrictions to $S^2$
of homogeneous complex polynomials $h$ of degree $j$ satisfying
$\partial_1^2h+\partial_2^2h+\partial_3^2h=0$.
The inner product is $\int_{S^2}\overline f g\,dA$, with the usual
surface area $\int_{S^2}dA=4\pi$. For $a\in\{x,y,z\}$ choose
\[
 e_a(t)=\sqrt{\frac{2j+1}{4\pi}}\,P_j(t_a),\qquad t\in S^2,
\]
where $t_a$ is the $a$-coordinate and $P_j$ is the degree-$j$ Legendre
polynomial defined by
$P_j(s)=(2^j j!)^{-1}(\frac{d}{ds})^j(s^2-1)^j$, so $P_j(1)=1$.
These functions are unit zero-eigenvalue vectors of $J_a$, since they
are invariant under rotations about the corresponding axis.
The spherical-harmonic addition formula
\cite[Eq.~(14.30.9)]{DLMF} gives their inner product as $P_j$ of the dot
product of the two axes. For distinct coordinate axes this is
\begin{equation*}
 u_j=P_j(0)=(-1)^{j/2}\binom{j}{j/2}2^{-j}.
\end{equation*}
Since $P_j(-s)=P_j(s)$ for even $j$, reversing an axis does not change
the chosen vector. The octahedral rotation group $\mathsf S_4$ therefore
permutes these three vectors without signs. Consequently, on the $\mathsf{Cl}(1)$-fixed space,
the operator $\sum_{a\in\{x,y,z\}}\Pi(\rho|\Theta_\infty,\sigma^a)$ has range contained in the line of $u=e_x+e_y+e_z$. If this vector is
nonzero, the eigenvalue on that line is $1+2u_j$. On its orthogonal complement the sum is zero.
Thus the possible $\mathcal H_T$ eigenvalues on the fixed space are
\begin{equation}\label{eq:exactgap-HT-spectrum}
 h_j=2(1-u_j)\quad\text{on }\mathbb C u,
 \qquad
 3\quad\text{on }u^\perp.
\end{equation}
The absent line when $u=0$ simply contributes no eigenvalue.

For $j=0$, $\mathcal H_T=\mathcal H_T^{\mathrm{even}}=0$. For $j=4$, every $J_a$-eigenvalue has absolute value at most eight, so $\mathcal H_T^{\mathrm{even}}=0$
and the ordinary $5/4$ bound suffices. For every other even $j$ except $j=8$, the eigenvalues in
\eqref{eq:exactgap-HT-spectrum} are at least $3/2$. Indeed, $u_j<0$ for $j\equiv2\pmod4$, and for
$j\equiv0\pmod4$, $j\ge12$, the positive sequence decreases from
$u_{12}=231/1024<1/4$. Therefore
\[
 \mathcal H_T(\mathcal H_T-5/4)\succeq \mathcal H_T/4\succeq \mathcal H_T^{\mathrm{even}}/4
\]
on all those fixed spaces. It remains to check spin eight.

The $\mathsf{Cl}(1)$-fixed space in spin eight is one-dimensional. Equivalently, the $\mathsf S_4$-fixed space is one-dimensional, and the trace of the average of its 24 action matrices is
\[
 \frac1{24}\bigl(17+8(-1)+6(1)+9(1)\bigr)=1,
\]
using rotations of angles $0,2\pi/3,\pi/2,\pi$, respectively. On this line,
\begin{equation*}
 u_8=\frac{35}{128},
 \qquad
 \lVert u\rVert^2=3+6u_8=\frac{297}{64},
 \qquad
 \mathcal H_T=\frac{93}{64}.
\end{equation*}
For the $z$ direction, $F_z^{\mathrm{even}}$ selects $m=8,-8$. The overlap of $e_x$ with a unit $L_z$-eigenvector of eigenvalue $8$ or $-8$
has squared modulus
\[
 D=\frac{\binom{16}{8}}{4^8}.
\]
In the degree-sixteen polynomial model, choose
$\sqrt{\binom{16}{k}}\,z_1^{16-k}z_2^k$ ($0\le k\le16$) as an
orthonormal basis. Expanding the rotated unit vector of zero $L_x$
eigenvalue gives the displayed squared coefficient. The same component in $e_y$ has the same phase, since $e^{\pm\mathrm{i} 8\pi/2}=1$; $e_z$ has no such
component. The two extremal components of $u$ therefore contribute $8D$. Summing over $a\in\{x,y,z\}$
and normalizing gives
\[
 \mathcal H_T^{\mathrm{even}}=\frac{24D}{3+6u_8}=\frac{65}{64}.
\]
Finally,
\begin{equation*}
 \frac{93}{64}\left(\frac{93}{64}-\frac54\right)
 -\frac14\frac{65}{64}
 =\frac{169}{4096}>0.
\end{equation*}
This proves the last case and the lemma, including any number of copies of each spin representation.
\end{proof}

Every $\mathsf{Cl}_{\pi/4}(n)$-fixed vector is fixed by the local copy of $\mathsf G_{\Theta_8}=\mathsf{Cl}(1)$ on every $\mathfrak{su}(2)$-triangle. Sum
\eqref{eq:exactgap-cl1-local} over these triangles, recalling that each Pauli operator belongs to $d^2/4$ such triangles. It follows that
\begin{equation}\label{eq:exactgap-even-defect}
 \mathcal D_{\triangle}\succeq\frac14\sum_T\sum_{P\in T}F_P^{\mathrm{even}}
 =\frac{d^2}{16}\mathcal H^{\mathrm{even}}
 \qquad\text{on }V^{\mathsf{Cl}_{\pi/4}(n)}.
\end{equation}
This is the second inequality in \eqref{eq:exactgap-projective-estimates}.

\subsection{Proof of the projective gap and its attaining representation}

For $d\ge8$, $0\le (d-8)(d+2)/16\le d^2/16$. Combining
\eqref{eq:exactgap-H-square}, Proposition~\ref{prop:exactgap-commuting}, \eqref{eq:exactgap-even-defect}, and $\mathcal H=\mathcal H^{\mathrm{odd}}+\mathcal H^{\mathrm{even}}$ on $V^{\mathsf{Cl}_{\pi/4}(n)}$ gives
\begin{equation*}
 \begin{aligned}
 \mathcal H^2&\succeq(1+d^2/16)\mathcal H+\frac{(d-8)(d+2)}{16}\mathcal H^{\mathrm{odd}}+(d^2/16)\mathcal H^{\mathrm{even}}\\
 &\succeq\left(1+d^2/16+\frac{(d-8)(d+2)}{16}\right)\mathcal H
 =\frac{d(d-3)}8\mathcal H.
 \end{aligned}
\end{equation*}
Together with the stronger bound \eqref{eq:exactgap-complement} on the complement, this shows that every positive eigenvalue of $\mathcal H$ in
every representation of $\mathsf{PU}(d)$ is at least $d(d-3)/8$. The $\tau_{4,4}$ eigenvector from
\cite[Proposition~3.17]{BH26} supplies equality after division by $|\mathcal P_n|=4^n-1=d^2-1$. This proves the gap formula
for the projective assertion of Theorem~\ref{thm:exactgap-main}.

It remains to identify the attaining highest weight. Use the objects of
\cite[Proposition~3.17]{BH26}: $\xi=(1^4)$, the projection $\Pi_\xi$ onto
$\bigwedge^4\mathbb C^d\subset(\mathbb C^d)^{\otimes4}$, and the averaging
linear map $\mathbf\Gamma=\Pi(\tau_{4,4}|\mathsf{SU}(d))$ on
$\operatorname{End}((\mathbb C^d)^{\otimes4})$, explicitly
$\mathbf\Gamma(X)=\int_{\mathsf{SU}(d)}U^{\otimes4}X(U^\dagger)^{\otimes4}\,d\mu(U)$.
Put
\[
 \Omega=\mathbf{1}_d^{\otimes4}+\sum_{P\in\mathcal P_n}P^{\otimes4}.
\]
The Baer--Haah upper-bound eigenvector is exactly
\[
 E_\xi=(\mathbf{1}-\mathbf\Gamma)(\Pi_\xi\Omega\Pi_\xi)
       =(\mathbf{1}-\mathbf\Gamma)(\Pi_\xi\Omega).
\]
It is nonzero by the same proposition and is
$\mathsf{Cl}_{\pi/4}(n)$-fixed because Pauli conjugations by $\mathsf{Cl}_{\pi/4}(n)$ permute the summands, and the fourth tensor
power removes the signs of Pauli representatives. For every $P$, the $J_P$-eigenvalues occurring in $E_\xi$ lie in $[-8,8]$. Invariance under $e^{\mathrm{i}\pi P/4}$ restricts these eigenvalues to
$0,\pm8$. Hence $F_PE_\xi=J_P^2E_\xi/64$, and
$(\sum_{P\in\mathcal P_n}J_P^2)E_\xi=8d(d-3)E_\xi$, using its known
$\mathcal H$ eigenvalue from \cite[Proposition~3.17]{BH26}.

Writing $V(\lambda)$ for the irreducible representation of highest weight
$\lambda$, the Pieri rule for tensoring with an exterior power
\cite[Section~1.1]{HMS19} gives, for $d\ge8$,
\begin{equation*}
 \operatorname{End}(\bigwedge^4\mathbb C^d)
 =\bigoplus_{r=0}^4V(1^r,0^{d-2r},-1^r).
\end{equation*}
Here $\mathsf{GL}(d,\mathbb C)$ is the group of invertible complex
$d\times d$ matrices; the dual action on linear functionals is
$(g\cdot f)(v)=f(g^{-1}v)$, and $\det^{-1}$ is the one-dimensional action
$g\mapsto(\det g)^{-1}$. As representations of this group,
$(\bigwedge^4\mathbb C^d)^*\cong
\bigwedge^{d-4}\mathbb C^d\otimes\det^{-1}$.
A partition diagram has left-aligned rows whose lengths are the
partition entries. The Pieri rule adds four boxes to the column $(1^{d-4})$, with no two
added boxes in the same row. The resulting highest weights, padded to
$d$ coordinates, are $(2^r,1^{d-2r},0^r)$ for $0\le r\le4$, each occurring
once. Tensoring by $\det^{-1}$ subtracts one from every
coordinate, giving the displayed decomposition after restriction to
$\mathsf{SU}(d)$.
With the normalization fixed by \eqref{eq:exactgap-pauli-swap}, the operator
$\sum_{P\in\mathcal P_n}J_P^2$ acts on the summand indexed by $r$ as the scalar
\begin{equation*}
 2rd(d-r+1).
\end{equation*}
For a zero-sum highest weight $\lambda$, this scalar is
$c_\lambda=d(\lambda,\lambda+2\delta)$, where
$\delta_i=(d+1-2i)/2$ for $1\le i\le d$ and the inner product is Euclidean.
Thus $\delta=\tfrac12\sum_{i<j}(\varepsilon_i-\varepsilon_j)$,
where $\varepsilon_i$ are the standard coordinate vectors; each difference
in the sum has squared length two. Evaluating on a highest-weight vector with a trace-orthonormal
basis, writing $E_{ij}$ for the matrix with a single entry $1$ in position
$(i,j)$, the diagonal terms contribute
$(\lambda,\lambda)$ and the matrix-unit pairs $E_{ij},E_{ji}$ ($i<j$) contribute
$\sum_{i<j}(\lambda_i-\lambda_j)=(\lambda,2\delta)$;
the factor $d$ comes from $\operatorname{tr}(PQ)=d\,\delta_{P,Q}$.
For $\lambda=(1^r,0^{d-2r},-1^r)$, the two contributions are $2r$ and
$2r(d-r)$, respectively. The same normalization gives $d^2-1$ for the action
$U\mapsto U$ on $\mathbb C^d$. These five values are distinct
for $0\le r\le4$, and only $r=4$ gives $8d(d-3)$. Thus $E_\xi$ lies in the stated highest-weight
representation. Let $e_1,\ldots,e_d$ be the standard basis of $\mathbb C^d$. The nonzero
vector $\sum_i e_i\otimes\bar e_i$ is fixed by $\tau_{1,1}$. Tensoring
with $t-4$ copies of this vector, and regrouping the factors, gives an
injective map from $\tau_{4,4}$ to $\tau_{t,t}$ commuting with the group
actions, proving the remaining assertions.

\section*{Statements and Declarations}

\noindent\textbf{AI disclosure.}
OpenAI's ChatGPT (GPT-5.6 Sol and GPT-6 Astra) assisted with exploratory proof development, computational checks, literature searches, and English-language polishing; the authors are responsible for all content and correctness.

\smallskip
\noindent\textbf{Data availability.}
The results follow from the analytic proofs and cited mathematical results;
no external computational dataset is required.

\end{document}